\theoremstyle{remark}
\newtheorem*{notation*}{\protect\notationname}
\theoremstyle{plain}
\newtheorem{thm}{\protect\theoremname}
\theoremstyle{definition}
\newtheorem{defn}[thm]{\protect\definitionname}
\theoremstyle{plain}
\newtheorem{lem}[thm]{\protect\lemmaname}
\theoremstyle{definition}
\newtheorem{example}[thm]{\protect\examplename}
\providecommand{\definitionname}{Definition}
\providecommand{\examplename}{Example}
\providecommand{\lemmaname}{Lemma}
\providecommand{\notationname}{Notation}
\providecommand{\theoremname}{Theorem}
\begin{document}
\title{Dynamical Entanglement}
\author{Gilad Gour}
\email{gour@ucalgary.ca}

\affiliation{Department of Mathematics and Statistics, University of Calgary, AB,
Canada T2N 1N4}
\affiliation{Institute for Quantum Science and Technology, University of Calgary,
AB, Canada T2N 1N4}
\author{Carlo Maria Scandolo}
\email{carlomaria.scandolo@ucalgary.ca}

\affiliation{Department of Mathematics and Statistics, University of Calgary, AB,
Canada T2N 1N4}
\affiliation{Institute for Quantum Science and Technology, University of Calgary,
AB, Canada T2N 1N4}
\begin{abstract}
Unlike the entanglement of quantum states, very little is known about
the entanglement of bipartite channels, called dynamical entanglement.
Here we work with the partial transpose of a superchannel, and use
it to define computable measures of dynamical entanglement, such as
the negativity. We show that a version of it, the max-logarithmic
negativity, represents the exact asymptotic dynamical entanglement
cost. We discover a family of dynamical entanglement measures that
provide necessary and sufficient conditions for bipartite channel
simulation under local operations and classical communication and
under operations with positive partial transpose.
\end{abstract}
\maketitle

\paragraph{Introduction.}

Quantum entanglement \citep{Plenio-review,Review-entanglement} is
universally regarded as the most important quantum phenomenon, signaling
the definitive departure from classical physics \citep{Schrodinger}.
Its importance ranges across different areas of physics, from quantum
thermodynamics \citep{Bocchieri,Seth-Lloyd,Lubkin,Gemmer-Otte-Mahler,Canonical-typicality,Popescu-Short-Winter,Mahler-book,Huber1,Huber2,delRio,Zurek},
to quantum field theory \citep{Ryu1,Ryu2,Witten} and condensed matter
\citep{Entanglement-many-body,Area-law,Condensed-matter}. In quantum
information it is a resource in many protocols that cannot be implemented
in classical theory, such as quantum teleportation \citep{Teleportation},
dense coding \citep{Dense-coding}, and quantum key distribution \citep{Ekert}.

An even more crucial aspect of physics is that all systems evolve.
This is described by quantum channels \citep{Nielsen2010,Wilde,Watrous}.
Given the importance of entanglement, a natural question is how physical
evolution interacts with it. For example, one can wonder how much
entanglement a given evolution creates or consumes.

To this end, in this letter, which is a concise presentation of the
most significant results of our previous work \citep{Gour-Scandolo},
we push entanglement out of its boundaries to the next level: from
quantum states (static entanglement) to quantum channels (dynamical
entanglement), filling an important gap in the literature (an independent
work in this respect is Ref.~\citep{Wilde-entanglement}). Preliminary
work was done in Refs.~\citep{Bennett-bipartite,Berta-cost,Bennett-converse,Pirandola-LOCC,Wilde-cost,Wilde-PPT-Das1,WW18},
but here we study the topic in utmost generality, using \emph{resource
theories} \citep{Quantum-resource-1,Quantum-resource-2,Resource-knowledge,Resource-currencies,Gour-single-shot,Regula2017,Multiresource,Gour-review,Adesso-resource,Single-shot-new}.
With them, the idea of entanglement as a resource can be made precise.
Resource theories have recently attracted considerable attention \citep{Gour-review},
producing plenty of important results in quantum information \citep{Plenio-review,Review-entanglement,delRio,Lostaglio-thermo,Review-coherence,Veitch_2014,Magic-states}.
Resource theories are particularly meaningful whenever there is a
restriction on the set of quantum operations that can be performed,
usually coming from the physical constraints of a task an agent is
trying to do \citep{Gour-review}.

Looking closely at the entanglement protocols mentioned above \citep{Teleportation,Dense-coding},
one notices that a state is converted into a particular channel \citep{Devetak-Winter,Resource-calculus}.
Thus, the need of a framework that goes beyond the conversion between
static entangled resources is built in the very notion of entanglement
as a resource. In other terms, we want to treat static and dynamical
resources on the same grounds. We do so by phrasing entanglement theory
as a resource theory of\emph{ }quantum processes \citep{Gour-review,Resource-channels-1,Resource-channels-2,Gour-Winter}.
In this setting, the generic resource is a bipartite channel \citep{Shannon-bipartite,Bipartite},
instead of a bipartite state.

In this letter, we start from the simulation of bipartite channels
with local operations and classical communication (LOCC) \citep{LOCC1,LOCC2,Lo-Popescu},
and we derive a family of convex dynamical entanglement measures that
provide necessary and sufficient conditions for the LOCC-simulation
of channels.

The key tool for the remainder of the letter is a generalization of
partial transpose \citep{PPT-Peres,PPT-Horodecki}. This allows us
to define superchannels with positive partial transpose (PPT) \citep{Leung},
which constitute the largest set of superoperations to manipulate
dynamical entanglement, also encompassing the standard entanglement
manipulations involving LOCC. In this setting, we define measures
of dynamical entanglement that can be computed efficiently with semidefinite
programs (SDPs). Specifically, one of them, the max-logarithmic negativity,
quantifies the amount of static entanglement needed to simulate a
channel using PPT superchannels.

Finally, with the same generalization of the partial transpose, we
discover bound dynamical entanglement, whereby it is not possible
to produce entanglement out of a class of channels---PPT channels
\citep{PPT1,PPT2}---that generalize PPT states \citep{PPT-Peres,PPT-Horodecki}.
\begin{notation*}
Physical systems are denoted by capital letters (e.g., $A$) with
$AB$ meaning $A\otimes B$. Working on quantum channels, it is convenient
to associate two subsystems $A_{0}$ and $A_{1}$ with every system
$A$, referring, respectively, to the input and output of the resource.
In the case of static resources, we take $A_{0}$ to be one-dimensional.
A channel from $A_{0}$ to $A_{1}$ is indicated with a calligraphic
letter $\mathcal{N}_{A}:=\mathcal{N}_{A_{0}\rightarrow A_{1}}$. Superchannels
are denoted by capital Greek letters (e.g., $\Theta$), and the action
of superchannels on channels by square brackets. Thus $\Theta_{A\rightarrow B}\left[\mathcal{N}_{A}\right]$
indicates the action of the superchannel $\Theta$ on the channel
$\mathcal{N}_{A}$.
\end{notation*}

\paragraph{LOCC simulation of bipartite channels.}

To manipulate dynamical resources, one needs quantum superchannels
\citep{Chiribella2008,Gour2018}, which are linear maps sending quantum
channels to quantum channels in a complete sense, i.e., even when
tensored with the identity superchannel. This means that if $\mathcal{N}_{RA}$
is a quantum channel, $\Theta_{A\rightarrow B}\left[\mathcal{N}_{RA}\right]$
is still a quantum channel, for any $R$.  Superchannels can be all
realized concretely with a preprocessing channel and a postprocessing
channel, connected by a memory system \citep{Chiribella2008,Gour2018}.
Specifically, an LOCC superchannel, used in LOCC simulation, consists
of LOCC pre- and post-processing, and is represented in Fig.~\ref{fig:LOCC-1}.
\begin{figure}
\begin{centering}
\includegraphics[scale=0.35]{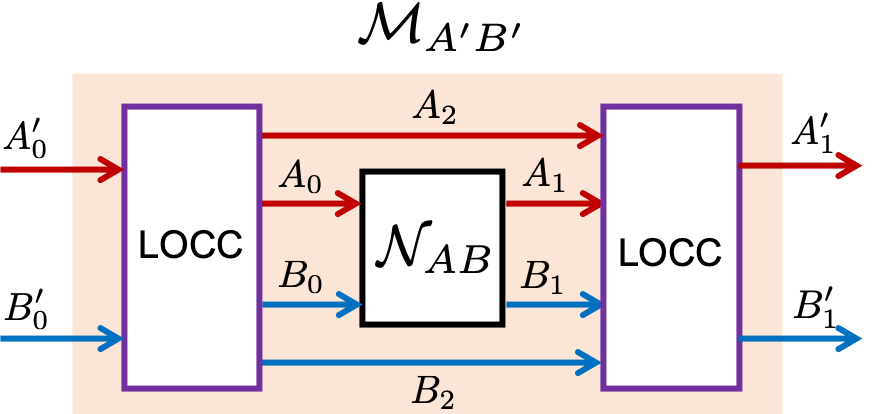}
\par\end{centering}
\caption{\label{fig:LOCC-1}Simulation of the channel $\mathcal{M}_{A'B'}$
from the channel $\mathcal{N}_{AB}$ with an LOCC superchannel, which
has LOCC pre- and postprocessing. Notice the presence of a memory
system for each of the two parties ($A_{2}$ and $B_{2}$, respectively).}
\end{figure}
 These superchannels are relevant when one is concerned with channel
simulation in bipartite communication-type scenarios where only classical
communication is allowed between the parties \citep{LOCC1,LOCC2}
(e.g., in teleportation \citep{Teleportation}).

Recall that with one qubit maximally entangled state (also known as
an ebit), thanks to quantum teleportation \citep{Teleportation},
we can simulate a qubit noiseless channel from Alice to Bob using
an LOCC scheme, and vice versa \citep{Devetak-Winter,Resource-calculus}.
Therefore one ebit---a static resource---is equivalent to a dynamical
one: a qubit channel. With a pair of such channels at hand, from Alice
to Bob and vice versa, we can LOCC-implement all bipartite channels
between the two parties when they both have qubit systems. This means
that such a pair of channels is the maximal resource. This is illustrated
in Fig.~\ref{fig:swap}(a).
\begin{figure}
\begin{centering}
\includegraphics[width=1\columnwidth]{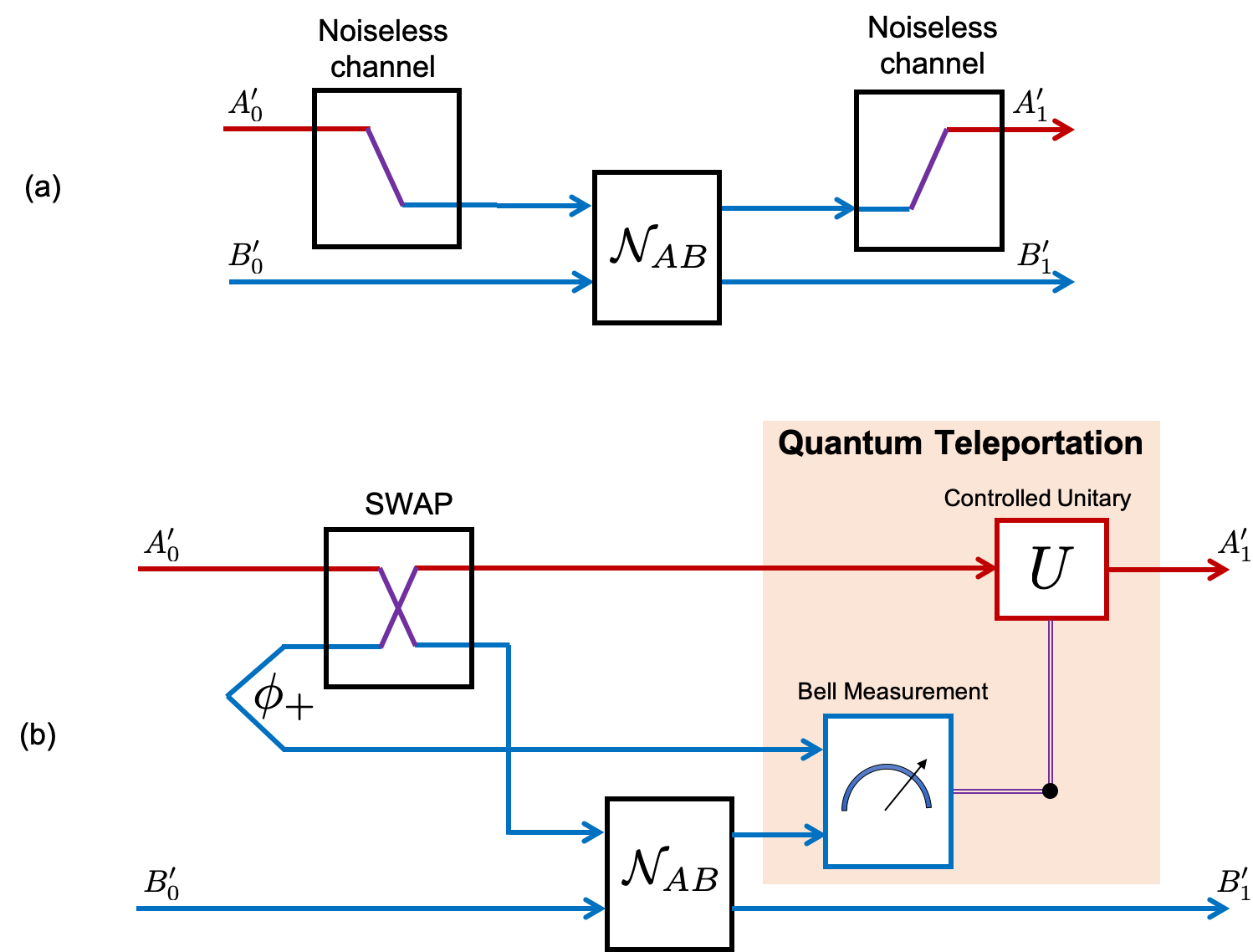}
\par\end{centering}
\caption{\label{fig:swap}(a) Simulation of an arbitrary bipartite channel
$\mathcal{N}_{A'B'}$ with two noiseless channels from Alice to Bob
and vice versa. (b) Simulation of an arbitrary bipartite channel $\mathcal{N}_{A'B'}$
with the swap resource and an LOCC superchannel (the postprocessing
is shaded in pink).}
\end{figure}
 In Fig.~\ref{fig:swap}(b) we show that in the same situation the
swap operation is another maximal resource, equivalent to 2 ebits.

In entanglement theory, a function $f$ is a \emph{measure of dynamical
entanglement} if $f\left(\Theta\left[\mathcal{N}_{AB}\right]\right)\leq f\left(\mathcal{N}_{AB}\right)$,
where $\Theta$ is an LOCC superchannel. It is conventional to assume
that $f$ vanishes on all separable channels \citep{SEP,Rains-SEP,PPT1},
which are regarded as the free resources in the theory of dynamical
entanglement; but this is not essential.

The very definition of a measure of dynamical entanglement indicates
that $f$ gives us a necessary condition for the simulation of channel
$\mathcal{M}$ starting from channel $\mathcal{N}$ and using an LOCC
superchannel $\Theta$. Indeed, if such a superchannel exists, namely
$\mathcal{M}=\Theta\left[\mathcal{N}\right]$, then $f\left(\mathcal{M}\right)\leq f\left(\mathcal{N}\right)$.
However, here we construct a family of convex measures of dynamical
entanglement that also give us a sufficient condition for LOCC simulation.
For any bipartite channels $\mathcal{P}$ and $\mathcal{N}$, define
\begin{equation}
E_{\mathcal{P}}\left(\mathcal{N}\right)=\sup_{\Theta\textrm{ LOCC}}\mathrm{Tr}\left[J^{\mathcal{P}}J^{\Theta\left[\mathcal{N}\right]}\right],\label{eq:complete family}
\end{equation}
where $J$ denotes the Choi matrix of the channel in the superscript,
and $\Theta$ is a generic LOCC superchannel. Note that these functions
need not vanish on separable channels. It is possible to show that
each function $E_{\mathcal{P}}$, with $\mathcal{P}$ ranging over
all bipartite channels, can be computed using a conic linear program
\citep[subsection 3 C]{Gour-Scandolo}.
\begin{thm}
In the theory of dynamical entanglement, a channel $\mathcal{N}$
can be LOCC-converted into a channel $\mathcal{M}$ if and only if
$E_{\mathcal{P}}\left(\mathcal{N}\right)\geq E_{\mathcal{P}}\left(\mathcal{M}\right)$
for \emph{every} bipartite channel $\mathcal{P}$.
\end{thm}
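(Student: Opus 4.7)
\emph{Necessity.} Assume $\mathcal{M}=\Lambda[\mathcal{N}]$ for some LOCC superchannel $\Lambda$. For any LOCC superchannel $\Theta$ acting on the output systems of $\mathcal{M}$, one has $\Theta[\mathcal{M}]=(\Theta\circ\Lambda)[\mathcal{N}]$. Because LOCC pre- and post-processing (Fig.~\ref{fig:LOCC-1}) compose to yield LOCC pre- and post-processing, $\Theta\circ\Lambda$ is again an LOCC superchannel, so $\{\Theta[\mathcal{M}]:\Theta\text{ LOCC}\}\subseteq\{\Theta'[\mathcal{N}]:\Theta'\text{ LOCC}\}$; taking the supremum of $\mathrm{Tr}[J^{\mathcal{P}}\,\cdot\,]$ on each side yields $E_{\mathcal{P}}(\mathcal{M})\leq E_{\mathcal{P}}(\mathcal{N})$ for every bipartite channel $\mathcal{P}$.

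\emph{Sufficiency.} I would argue by contrapositive: assuming $\mathcal{M}$ is not reachable from $\mathcal{N}$ by any LOCC superchannel, I construct $\mathcal{P}$ with $E_{\mathcal{P}}(\mathcal{M})>E_{\mathcal{P}}(\mathcal{N})$. Consider $K:=\{J^{\Theta[\mathcal{N}]}:\Theta\text{ LOCC}\}$, which is convex (LOCC superchannels form a convex set) and bounded (its elements inhabit the affine slice of Hermitian operators satisfying the Choi trace-preserving constraint $\mathrm{Tr}_{A_{1}B_{1}}J=I_{A_{0}B_{0}}$). Passing to the closure $\overline{K}$, the hypothesis reads $J^{\mathcal{M}}\notin\overline{K}$. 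Hyperplane separation then delivers a Hermitian operator $H$ and a real scalar $c$ with $\mathrm{Tr}[H\,J^{\mathcal{M}}]>c\geq\mathrm{Tr}[H\,J^{\Theta[\mathcal{N}]}]$ for every LOCC $\Theta$.

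The remaining step promotes $H$ to the Choi matrix $J^{\mathcal{P}}$ of a legitimate channel without breaking strict separation. Two affine freedoms suffice. (i) For any Hermitian $X$ on $A_{0}B_{0}$, the operator $X\otimes I_{A_{1}B_{1}}$ pairs with any channel's Choi matrix $J^{\mathcal{C}}$ to give $\mathrm{Tr}[X\,\mathrm{Tr}_{A_{1}B_{1}}J^{\mathcal{C}}]=\mathrm{Tr}[X]$, a $\mathcal{C}$-independent constant. Choosing $X=(\gamma\,I_{A_{0}B_{0}}-\mathrm{Tr}_{A_{1}B_{1}}H)/d_{A_{1}B_{1}}$ and rescaling by $1/\gamma$ (with any $\gamma>0$), I obtain $\tilde{H}$ satisfying $\mathrm{Tr}_{A_{1}B_{1}}\tilde{H}=I_{A_{0}B_{0}}$ with the separation intact up to an overall positive factor. (ii) The Choi matrix $J^{\mathcal{D}}=I_{A_{0}B_{0}A_{1}B_{1}}/d_{A_{1}B_{1}}$ of the maximally depolarizing channel is strictly positive and also satisfies the trace-preserving condition, so $J^{\mathcal{P}}:=(\tilde{H}+\lambda J^{\mathcal{D}})/(1+\lambda)$ is the Choi matrix of a bona fide channel $\mathcal{P}$ once $\lambda$ is large enough that $J^{\mathcal{P}}\geq0$. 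Since $\mathrm{Tr}[J^{\mathcal{D}}J^{\mathcal{C}}]=d_{A_{0}B_{0}}/d_{A_{1}B_{1}}$ is also $\mathcal{C}$-independent, this shift cancels in the difference $\mathrm{Tr}[J^{\mathcal{P}}(J^{\mathcal{M}}-J^{\Theta[\mathcal{N}]})]$, which inherits the uniform strict positivity from the Hahn--Banach gap $\mathrm{Tr}[H\,J^{\mathcal{M}}]-c>0$. Hence $\mathrm{Tr}[J^{\mathcal{P}}J^{\mathcal{M}}]>E_{\mathcal{P}}(\mathcal{N})$, and combining with $E_{\mathcal{P}}(\mathcal{M})\geq\mathrm{Tr}[J^{\mathcal{P}}J^{\mathcal{M}}]$ (take $\Theta$ to be the identity superchannel) yields $E_{\mathcal{P}}(\mathcal{M})>E_{\mathcal{P}}(\mathcal{N})$, the desired contradiction.

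The delicate point is that LOCC is famously not topologically closed, so $K$ may strictly omit limit points and $\mathcal{M}\notin\{\Lambda[\mathcal{N}]:\Lambda\text{ LOCC}\}$ need not imply $J^{\mathcal{M}}\notin\overline{K}$---which is precisely what hyperplane separation requires. One must therefore either interpret LOCC-convertibility in the theorem as membership in the closure $\overline{K}$, or invoke the conic linear program characterization of LOCC superchannels referenced after Eq.~\eqref{eq:complete family} to close this gap rigorously.
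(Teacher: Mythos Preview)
Your argument is sound and follows the standard separating-hyperplane route: necessity via closure of LOCC under composition, sufficiency by Hahn--Banach separation of $J^{\mathcal{M}}$ from the convex orbit $\{J^{\Theta[\mathcal{N}]}\}$, followed by a two-step affine normalization (shift by $X\otimes I$ to fix the partial-trace constraint, then convex mixture with the depolarizing Choi matrix to restore positivity) that preserves the uniform gap. The technical steps are carried out correctly, and you are right to flag the LOCC-closure subtlety at the end.

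As for comparison: the letter does not actually contain a proof of this theorem---it defers entirely to the companion paper \cite{Gour-Scandolo}, subsection~3\,C, noting only that each $E_{\mathcal{P}}$ admits a conic linear program. That remark strongly suggests the companion argument proceeds via conic duality, which is essentially the same mechanism you invoke (hyperplane separation is the geometric face of weak/strong duality). So your approach is in the same spirit as what the referenced proof presumably does; the main thing you add explicitly is the careful promotion of the separating functional $H$ to a bona fide channel Choi matrix $J^{\mathcal{P}}$, which is the only nontrivial wrinkle beyond textbook separation and which you handle cleanly. Your final caveat about LOCC not being closed is exactly the right disclaimer: the theorem as stated is most naturally read with ``LOCC-converted'' meaning convertibility under the closure of LOCC superchannels, and this is consistent with how the paper treats the conversion distance in Eq.~\eqref{eq:conversion distance}.
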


The proof is in Ref.~\citep[subsection 3 C]{Gour-Scandolo}. Since
we need to consider \emph{all} bipartite channels $\mathcal{P}$,
this family of measures of dynamical entanglement is not so practical
to work with. Unfortunately, one cannot expect to find a finite family
of such monotones, as shown in Ref.~\citep{Gour-infinite}.

Given two channels $\mathcal{N}$ and $\mathcal{M}$, to determine
if the former can be LOCC-converted into the latter, we can alternatively
compute their conversion distance, defined following similar ideas
to Ref.~\citep{Tomamichel}:
\begin{equation}
d_{\mathrm{LOCC}}\left(\mathcal{N}\to\mathcal{M}\right)=\frac{1}{2}\inf_{\Theta\textrm{ LOCC}}\left\Vert \Theta\left[\mathcal{N}\right]-\mathcal{M}\right\Vert _{\diamond}.\label{eq:conversion distance}
\end{equation}
If this distance is zero, we can convert $\mathcal{N}$ into $\mathcal{M}$
using a superchannel in the topological closure of LOCC superchannels
\citep{Chitambar2014}. Again, this distance can be calculated using
a conic linear program \citep[subsection 3 D]{Gour-Scandolo}, thanks
to the results in Refs.~\citep{Watrous-diamond,Gour-Winter}.

\paragraph{PPT superchannels.}

In entanglement theory, one of the most practical tools to determine
whether a state is entangled is the partial transpose \citep{PPT-Peres,PPT-Horodecki}.
One defines PPT states as the bipartite states $\rho_{AB}$ such that
$\mathcal{T}_{B}\left(\rho_{AB}\right)=\rho_{AB}^{T_{B}}$ is still
a valid state, where $\mathcal{T}$ denotes the transpose map \citep{PPT-Peres,PPT-Horodecki}.
Recall, however, that the set of PPT states is larger than the set
of separable states, due to the existence of bound entangled states
\citep{Bound-entanglement}. One then defines a PPT channel to be
a bipartite channel $\mathcal{N}_{AB}$ such that, applying the transpose
map on Bob's input and output, we get another valid channel $\mathcal{N}_{AB}^{\Gamma}:=\mathcal{T}_{B_{1}}\circ\mathcal{N}_{AB}\circ\mathcal{T}_{B_{0}}$
\citep{PPT1,PPT2}. Note that the set of PPT channels is larger than
the set of LOCC channels. It is not hard to show that PPT channels
are also completely PPT preserving, for they preserve PPT states even
when tensored with the identity channel \citep{PPT1,PPT2}. Finally,
the Choi matrix of a PPT channel $\mathcal{N}_{AB}$ is such that
$\left(J_{AB}^{\mathcal{N}}\right)^{T_{B}}\geq0$. PPT states and
operations can be regarded as free; therefore, anything that is \emph{not}
PPT---which we call NPT---will be a resource. For this reason, this
resource theory is called the resource theory of NPT entanglement.
As it often happens in resource theories, one considers a larger set
of operations to get upper and lower bounds for the relevant figures
of merit, especially when the interesting resource theory is mathematically
hard to study, such as the LOCC theory \citep{NP-hard1,NP-hard2}.
This is precisely why we study NPT entanglement.

Here we generalize partial transpose, defining the transpose supermap
$\Upsilon$ as $\Upsilon\left[\mathcal{N}_{A}\right]=\mathcal{T}_{A_{1}}\circ\mathcal{N}_{A}\circ\mathcal{T}_{A_{0}}$.
Note that the Choi matrix of $\Upsilon\left[\mathcal{N}_{A}\right]$
is the transpose of the Choi matrix of $\mathcal{N}_{A}$. In this
way, PPT channels can be characterized in a similar way to PPT states:
PPT channels are those bipartite channels such that $\Upsilon_{B}\left[\mathcal{N}_{AB}\right]$
is still a valid channel. Now we iterate the previous construction
to define PPT superchannels.
\begin{defn}
\label{def:PPT}A superchannel $\Theta_{AB\rightarrow A'B'}$ is \emph{PPT}
if $\Theta_{AB\rightarrow A'B'}^{\Gamma}:=\Upsilon_{B'}\circ\Theta_{AB\rightarrow A'B'}\circ\Upsilon_{B}$
is still a valid superchannel.
\end{defn}

These superchannels enjoy some remarkable properties.
\begin{lem}
The following are equivalent:
\begin{enumerate}
\item $\Theta_{AB\rightarrow A'B'}$ is a PPT superchannel.
\item $\Theta_{AB\rightarrow A'B'}$ is completely PPT preserving.\label{enu:complete}
\item $\left(\mathbf{J}_{ABA'B'}^{\Theta}\right)^{T_{BB'}}\geq0$, where
$\mathbf{J}_{ABA'B'}^{\Theta}$ is the Choi matrix of the superchannel
$\Theta$ \citep{Circuit-architecture,Hierarchy-combs,Perinotti1,Perinotti2,Gour2018}.\label{enu:PPT Choi}
\end{enumerate}
\end{lem}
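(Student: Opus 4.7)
The natural plan is to prove the chain $(1)\Leftrightarrow(3)\Leftrightarrow(2)$, with the Choi-matrix criterion serving as the bridge. For $(1)\Leftrightarrow(3)$, I would first relate the Choi matrix of $\Theta^{\Gamma}=\Upsilon_{B'}\circ\Theta\circ\Upsilon_{B}$ to that of $\Theta$. Since $\Upsilon$ transposes the Choi matrix of the channel it acts on, combining this observation with the link-product representation $J^{\Theta[\mathcal{N}]}=\mathbf{J}^{\Theta}\star J^{\mathcal{N}}$ and the standard link-product identity $A^{T_{C}}\star B=A\star B^{T_{C}}$ on contracted systems $C$, one obtains
\begin{equation}
\mathbf{J}^{\Theta^{\Gamma}}=\left(\mathbf{J}^{\Theta}\right)^{T_{BB'}}.
\end{equation}
The characterization of valid superchannels from Refs.~\citep{Chiribella2008,Gour2018} consists of positivity of the Choi matrix plus a finite list of linear trace/marginal conditions. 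These linear conditions involve only partial traces of $\mathbf{J}^{\Theta}$, which are invariant under partial transposition on the traced-out systems and get merely transposed on the remaining ones; hence they pass automatically from $\mathbf{J}^{\Theta}$ to $(\mathbf{J}^{\Theta})^{T_{BB'}}$. The only nontrivial requirement for $\Theta^{\Gamma}$ to be a valid superchannel is therefore $(\mathbf{J}^{\Theta})^{T_{BB'}}\geq 0$, establishing $(1)\Leftrightarrow(3)$.

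For $(3)\Rightarrow(2)$, applying the same identities to the link product yields
\begin{equation}
\left(J^{\Theta[\mathcal{N}]}\right)^{T_{B'}}=\left(\mathbf{J}^{\Theta}\right)^{T_{BB'}}\star\left(J^{\mathcal{N}}\right)^{T_{B}}.
\end{equation}
If $\mathcal{N}$ is PPT and (3) holds, both factors on the right are positive semidefinite, and because the link product of positive operators is positive, $\Theta[\mathcal{N}]$ is PPT; running the same computation with $\Theta$ tensored by an identity superchannel on an arbitrary reference bipartition upgrades this to complete PPT preservation. For the converse $(2)\Rightarrow(3)$, I would feed $\Theta$ a carefully chosen PPT ``probe'' channel that encodes the full structure of $\mathbf{J}^{\Theta}$. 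A natural candidate is the product of two local identity channels on enlarged references $R_{A}A\to R_{A}A$ and $R_{B}B\to R_{B}B$, which is manifestly a product channel across the $AB$ cut, hence PPT, and whose Choi matrix---via the reshuffling performed by the link product---reproduces $\mathbf{J}^{\Theta}$ itself up to a relabeling of subsystems. Complete PPT preservation of $\Theta$ on this probe then forces $(\mathbf{J}^{\Theta})^{T_{BB'}}\geq 0$, closing the circle.

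The main obstacle is the careful bookkeeping of the eight subsystems $A_{0},A_{1},B_{0},B_{1},A_{0}',A_{1}',B_{0}',B_{1}'$ entering the link product, together with a clean justification of the transport of partial transposes between the two factors---routine in principle but notationally demanding. Once this algebra is set up correctly and the correspondence between the supermap $\Upsilon$ and Choi-level partial transposition is made precise, the remaining steps reduce to standard manipulations with Choi matrices and the semidefinite characterization of valid superchannels.
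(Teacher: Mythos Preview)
Your proposal is sound and follows precisely the natural line of argument: establish $\mathbf{J}^{\Theta^{\Gamma}}=(\mathbf{J}^{\Theta})^{T_{BB'}}$, observe that the linear marginal constraints characterizing superchannels are invariant under this partial transposition so that only positivity is at stake, and then use link-product identities together with a product probe channel to close the loop $(3)\Leftrightarrow(2)$. Note, however, that the present letter does not actually prove the lemma---it defers entirely to Ref.~\citep[subsection~5\,A]{Gour-Scandolo}---so there is no in-paper argument to compare yours against; that said, the companion paper's proof proceeds along essentially the same Choi-matrix route you outline, so your plan matches the intended approach.

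One small point worth tightening in the $(2)\Rightarrow(3)$ step: be explicit that the ``probe'' you feed to $\mathsf{id}\otimes\Theta$ is the channel whose Choi matrix is the tensor product of unnormalized maximally entangled states on the $A$ and $B$ sides separately (equivalently, the product of local identity/swap channels). This is a \emph{product} channel across the $A|B$ cut, hence PPT, and applying $\mathsf{id}\otimes\Theta$ to it returns, by construction of the superchannel Choi matrix, a channel whose Choi matrix is $\mathbf{J}^{\Theta}$ up to relabelling. Complete PPT preservation then yields $(\mathbf{J}^{\Theta})^{T_{BB'}}\geq 0$ directly. You already say this, but making the identification between the output Choi matrix and $\mathbf{J}^{\Theta}$ explicit (rather than ``up to a relabeling of subsystems'') is where most of the bookkeeping you flag actually lives.
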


A proof of this result can be found in Ref.~\citep[subsection 5 A]{Gour-Scandolo}.
Property~\ref{enu:complete} means that PPT superchannels preserve
PPT channels in a complete sense. Property~\ref{enu:PPT Choi} tells
us that PPT superchannels are the same objects that appeared in Ref.~\citep{Leung}.
Despite the fairly simple condition defining PPT superchannels at
the level of Choi matrices, we do not know if all of them can be realized
with PPT pre- and postprocessing. When this happens, we call them
\emph{restricted} PPT superchannels. It is not hard to show that restricted
PPT superchannels are indeed PPT superchannels in the sense of definition~\ref{def:PPT}.
Instead, we conjecture that the converse is not true, so we are really
considering a larger set of superchannels. This is one of the main
differences from a related work by Wang and Wilde \citep{WW18}: there
the authors study only restricted PPT superchannels, and they do not
consider bipartite channels, but only one-way channels from Alice
to Bob (or vice versa).

Our approach brings a lot of mathematical simplifications. For instance,
if we replace LOCC with PPT in Eqs.~\eqref{eq:complete family} and
\eqref{eq:conversion distance}, the NPT entanglement measures and
the conversion distance become computable efficiently with SDPs (see
Ref.~\citep[subsections 5 B and 5 C]{Gour-Scandolo}). However, note
that this family of NPT entanglement monotones will not provide a
sufficient condition for the convertibility under \emph{LOCC} superchannels.

\paragraph{New measures of dynamical entanglement.}

Since PPT channels contain LOCC channels, PPT superchannels contain
LOCC ones. Thus, measures of NPT dynamical entanglement (i.e., monotonic
under PPT superchannels) are also measures of LOCC dynamical entanglement
(i.e., monotonic under LOCC superchannels). As seen above, working
with PPT superchannels is mathematically simpler. For this reason,
focusing on the PPT-simulation of channels we obtain measures of LOCC
dynamical entanglement that are easily computable.

The first example in this respect is the the \emph{negativity} \citep{VW2002},
defined for states as $N\left(\rho_{AB}\right)=\frac{\left\Vert \mathcal{T}_{B}\left(\rho_{AB}\right)\right\Vert _{1}-1}{2}$.
The generalization to bipartite channels is straightforward: replace
the trace norm with the diamond norm, and the transpose map $\mathcal{T}_{B}$
with the transpose supermap $\Upsilon_{B}$.
\begin{equation}
N\left(\mathcal{N}_{AB}\right)=\frac{\left\Vert \Upsilon_{B}\left[\mathcal{N}_{AB}\right]\right\Vert _{\diamond}-1}{2}.
\end{equation}
Contextually, the logarithmic negativity is defined as 
\begin{equation}
LN\left(\mathcal{N}_{AB}\right)=\log_{2}\left\Vert \Upsilon_{B}\left[\mathcal{N}_{AB}\right]\right\Vert _{\diamond}.
\end{equation}
We prove that these are measures of dynamical entanglement that can
be computed efficiently with an SDP (cf.\ Ref.~\citep[subsection 5 C]{Gour-Scandolo}).

Now we introduce a new measure of NPT dynamical entanglement, called
max-logarithmic negativity (MLN) (cf.\ \citep{WW19}). It is a generalization
of the notion of $\kappa$-entanglement introduced in \citep{WW18}.
The MLN is defined as
\begin{align}
 & LN_{\max}\left(\mathcal{N}_{AB}\right)\nonumber \\
 & :=\log_{2}\inf_{P_{AB}}\left\{ \max\left\{ \left\Vert P_{A_{0}B_{0}}\right\Vert _{\infty},\left\Vert P_{A_{0}B_{0}}^{T_{B_{0}}}\right\Vert _{\infty}\right\} \right\} ,\label{eq:MLN}
\end{align}
where $P_{AB}$ is a matrix subject to the constraints $-P_{AB}^{T_{B}}\leq\left(J_{AB}^{\mathcal{N}}\right)^{T_{B}}\leq P_{AB}^{T_{B}}$
and $P_{AB}\geq0$. Here $P_{A_{0}B_{0}}$ denotes $\mathrm{Tr}_{A_{1}B_{1}}\left[P_{AB}\right]$.
We can show that the MLN is an additive measure of dynamical entanglement,
computable with an SDP (see Ref.~\citep[subsection 5 C]{Gour-Scandolo}).

Despite its rather complicated definition, the MLN has a nice operational
interpretation, which generalizes the results in Refs.~\citep{PPTcost1,WW18}.
Consider the task of simulating $n$ parallel copies of the bipartite
channel $\mathcal{N}_{AB}$ out of the maximally entangled state $\left|\phi_{m}^{+}\right\rangle _{A_{0}B_{0}}$
of Schmidt rank $m$ using PPT superchannels (which, in this case,
take the form of PPT channels). Recall that $\left|\phi_{m}^{+}\right\rangle _{A_{0}B_{0}}$
is, up to a scaling factor 2, the maximal resource in the theory of
entanglement for bipartite channels, as we noted above. We require
that the conversion of $\left|\phi_{m}^{+}\right\rangle _{A_{0}B_{0}}$
into $\mathcal{N}_{AB}$ be exact for every $n$.  We want to study
the asymptotic entanglement cost of preparing $\mathcal{N}_{AB}$
according to this PPT protocol, viz.\ the minimum Schmidt rank of
maximally entangled states consumed per copy of $\mathcal{N}_{AB}$
produced when $n\rightarrow+\infty$. Remarkably, we show that this
cost is given precisely by the MLN. Clearly, the use of PPT superchannels
is not so physically motivated, but it provides a simple lower bound
to the more meaningful calculation of the entanglement cost under
LOCC superchannels \citep{PPTcost1,WW18}.
\begin{thm}
\label{thm:The-exact-asymptotic}The exact asymptotic NPT cost of
a bipartite channel $\mathcal{N}_{AB}$ is \textup{$LN_{\max}\left(\mathcal{N}_{AB}\right)$.}
\end{thm}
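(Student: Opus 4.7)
The plan is to prove the two matching bounds on the asymptotic cost separately, exploiting the additivity and PPT-monotonicity of the MLN together with an explicit simulation protocol built from the dual SDP variable. Throughout, I will use that $LN_{\max}$ is subadditive and superadditive (hence additive) on tensor products of channels, and that it is monotonic under PPT superchannels; both facts were asserted in the previous paragraph and can be verified directly from the SDP in \eqref{eq:MLN}. Let $C$ denote the exact asymptotic NPT cost of $\mathcal{N}_{AB}$; the goal is $C = LN_{\max}(\mathcal{N}_{AB})$.

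For the converse $C \geq LN_{\max}(\mathcal{N}_{AB})$, I would argue as follows. Suppose that for each $n$ there exists a PPT channel $\Lambda_n$ exactly simulating $\mathcal{N}_{AB}^{\otimes n}$ from $|\phi_{m_n}^+\rangle\langle\phi_{m_n}^+|_{A_0 B_0}$; since the maximally entangled state can itself be viewed as a bipartite channel with trivial input, $\Lambda_n$ is a PPT superchannel on bipartite channels. Monotonicity then gives
\begin{equation}
n\, LN_{\max}(\mathcal{N}_{AB}) = LN_{\max}(\mathcal{N}_{AB}^{\otimes n}) \leq LN_{\max}\bigl(|\phi_{m_n}^+\rangle\langle\phi_{m_n}^+|\bigr).
\end{equation}
A direct evaluation of the SDP \eqref{eq:MLN} on the maximally entangled state---taking $P$ proportional to the identity on its support to saturate the constraints---yields $LN_{\max}(|\phi_{m_n}^+\rangle\langle\phi_{m_n}^+|) = \log_2 m_n$. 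Dividing by $n$ and taking $n\to\infty$ gives the converse.

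For achievability $C \leq LN_{\max}(\mathcal{N}_{AB})$, I would construct, for each $n$, an explicit PPT superchannel that converts $|\phi_{m_n}^+\rangle$ with $m_n = \lceil 2^{n LN_{\max}(\mathcal{N}_{AB})}\rceil$ into $\mathcal{N}_{AB}^{\otimes n}$. Let $P^*_{AB}$ be an optimizer of the SDP in \eqref{eq:MLN} for $\mathcal{N}_{AB}$; by additivity, the $n$-fold tensor $(P^*)^{\otimes n}$ is optimal for $\mathcal{N}_{AB}^{\otimes n}$, so it suffices to exhibit a single-letter construction. Following the strategy of Refs.~\citep{PPTcost1,WW18}, the protocol will consist of: (i) a preprocessing that, from $|\phi_{m}^+\rangle$, prepares a bipartite state whose Choi-like structure is dictated by $P^*_{AB}$; (ii) a postprocessing that performs a generalized teleportation-type measurement so that the effective channel is $\mathcal{N}_{AB}$. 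The constraints $-P^{T_B} \leq (J^\mathcal{N})^{T_B} \leq P^{T_B}$ together with $P\geq 0$ are exactly what guarantees that both the ``plus'' and ``minus'' branches combine into a completely PPT-preserving supermap, and the bounds $\|P_{A_0 B_0}\|_\infty,\,\|P_{A_0 B_0}^{T_{B_0}}\|_\infty \leq 2^{LN_{\max}(\mathcal{N}_{AB})}$ bound the Schmidt rank of the required entangled resource.

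The main obstacle is verifying the achievability construction: one must simultaneously check that the preprocessing is a legitimate quantum channel, that the overall superchannel preserves PPT channels (using property~\ref{enu:complete} of the lemma on PPT superchannels, via the partial-transpose inequalities on $P^*$), and that it exactly reproduces $\mathcal{N}_{AB}$. The converse, by contrast, is essentially a routine application of monotonicity plus the SDP evaluation on $|\phi_m^+\rangle$. Finally, the two inequalities combined with the additivity of $LN_{\max}$ close the proof; the full technical execution, including the choice of memory system and the detailed form of the pre- and post-processing maps, is carried out in Ref.~\citep[subsection 5 D]{Gour-Scandolo}.
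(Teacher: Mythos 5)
Your overall architecture is the right one and matches the proof in the companion paper: the converse follows from monotonicity of $LN_{\max}$ under PPT superchannels, its additivity, and the evaluation $LN_{\max}\left(\phi_{m}^{+}\right)=\log_{2}m$ (your choice $P=\frac{1}{m}I$ does saturate the SDP, since for a static resource $A_{0}B_{0}$ is trivial, the objective reduces to $\mathrm{Tr}\left[P\right]$, and $\mathrm{Tr}\left[P\right]=\mathrm{Tr}\left[P^{T_{B}}\right]\geq\Vert(\phi_{m}^{+})^{T_{B}}\Vert_{1}=m$); the achievability reduces, by additivity, to a one-shot simulation of $\mathcal{N}_{AB}$ from $\phi_{m}^{+}$ with $\log_{2}m_{n}=\log_{2}\lceil 2^{nLN_{\max}}\rceil$, whose rate tends to $LN_{\max}$. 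The converse half of your argument is essentially complete modulo the additivity and monotonicity statements you cite.

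The gap is in achievability, which is the substantive half of the theorem. You never actually construct the simulating PPT superchannel from the optimizer $P_{AB}^{*}$, nor verify the three things that must be checked: that the candidate map is a legitimate superchannel, that it is PPT (i.e., $\left(\mathbf{J}^{\Theta}\right)^{T_{BB'}}\geq0$, property~3 of the Lemma), and that it reproduces $\mathcal{N}_{AB}$ exactly. Asserting that the constraints on $P^{*}$ ``are exactly what guarantees'' these properties is the statement to be proved, not a proof. Moreover, your proposed vehicle---a restricted PPT superchannel built from PPT pre- and post-processing with a teleportation-type measurement---is not the mechanism used in the paper: since it is unknown (and conjectured false) that every PPT superchannel admits such a realization, the construction in the reference is carried out directly at the level of Choi matrices, reading off positivity of the partial transpose and exactness of the simulation from $-P^{T_{B}}\leq\left(J_{AB}^{\mathcal{N}}\right)^{T_{B}}\leq P^{T_{B}}$, $P\geq0$, and the operator-norm bounds on $P_{A_{0}B_{0}}$ and $P_{A_{0}B_{0}}^{T_{B_{0}}}$. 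A restricted realization would of course suffice for the achievability direction, but you give no argument that one exists at rate $LN_{\max}$. Finally, your claim that $\left(P^{*}\right)^{\otimes n}$ is optimal for $\mathcal{N}_{AB}^{\otimes n}$ needs the tensorization of operator inequalities for feasibility and the (nontrivial, dual-SDP) superadditivity for optimality; you are implicitly leaning on the asserted additivity for both, which is acceptable only because the paper states it separately.
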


A proof of this result can be found in Ref.~\citep[subsection 5 D]{Gour-Scandolo}.
We can prove that the MLN is an upper bound for another entanglement
measure, the NPT entanglement generation power $E_{g}^{\mathrm{PPT}}$
\citep{Bennett-bipartite,Resource-channels-1,Resource-channels-2,Gour-Winter}
(cf.\ Appendix~\ref{sec:Bound-between-NPT}):
\[
E_{g}^{\mathrm{PPT}}\left(\mathcal{N}_{AB}\right)\leq LN_{\max}\left(\mathcal{N}_{AB}\right).
\]

\paragraph{Bound entanglement for bipartite channels.}

Dual to the calculation of the cost of a bipartite channel, we have
the distillation of ebits out of a dynamical resource. It is known
that for some entangled static resources this is not possible: it
is the phenomenon of bound entanglement \citep{Bound-entanglement},
which occurs whenever we have a PPT entangled state.

Is it possible to distill ebits out $n$ copies of a PPT channel $\mathcal{N}_{AB}$?
Now, when we have $n$ copies of a channel, the \emph{timing} in which
they are available becomes relevant: dynamical resources have a natural
temporal ordering between input and output. Indeed, unlike states,
they can also be composed in nonparallel ways, e.g., in sequence.
Therefore, when manipulating dynamical resources, we also need to
specify \emph{when} and \emph{how} they can be used (see also Refs.~\citep{Resource-channels-2,Gour-Scandolo}).
This opens up the possibility of using adaptive schemes \citep{Pirandola-LOCC,Kaur2017,Wilde-cost,Wilde-entanglement}:
if we have $n$ resources $\mathcal{N}_{1},\dots,\mathcal{N}_{n}$
that are available, respectively, at times $t_{1}\leq t_{2}\leq\dots\leq t_{n}$,
the most general channel that can be simulated with these resources
is given by a free $n$-comb \citep{Gutoski,Circuit-architecture,Hierarchy-combs,Gutoski2,Resource-theories,Gutoski3,Resource-channels-1,Resource-channels-2},
depicted in Fig.~\ref{fig:A-PPT--comb} in the case of a PPT comb.
\begin{figure}
\begin{centering}
\includegraphics[width=1\columnwidth]{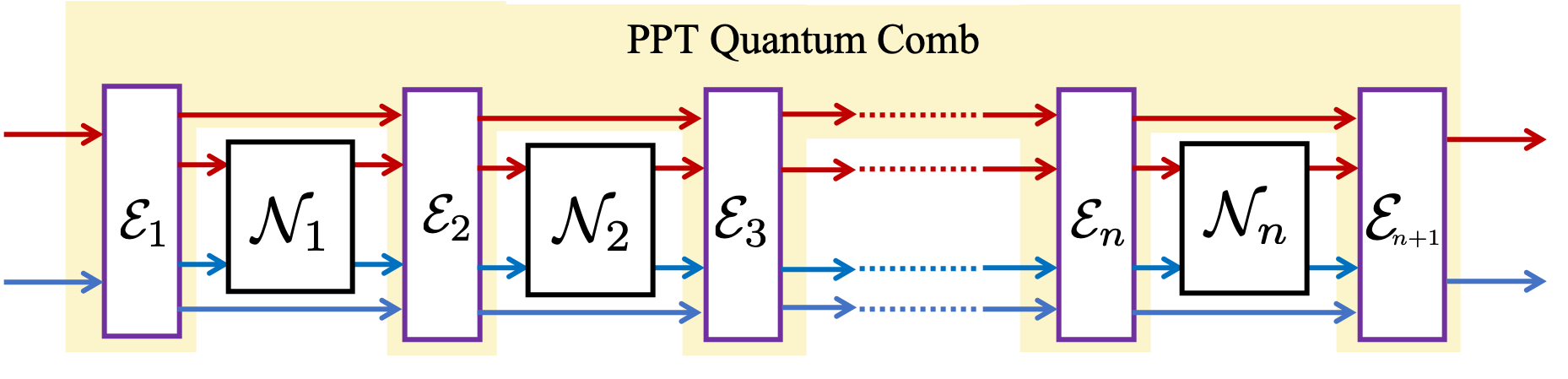}
\par\end{centering}
\caption{\label{fig:A-PPT--comb}A PPT $n$-comb acts on $n$ bipartite channels
$\mathcal{N}_{j}$, e.g., to distill ebits.}
\end{figure}
 Specializing this idea to the case of dynamical entanglement, this
amounts to considering an LOCC $n$-comb, where all the $n+1$ channels
$\mathcal{E}_{1}$, \ldots , $\mathcal{E}_{n+1}$ in Fig.~\ref{fig:A-PPT--comb}
are LOCC. Then we plug the $n$ copies of $\mathcal{N}_{AB}$ into
its $n$ slots.

Instead of LOCC combs, we consider PPT combs, which are defined as
the combs for which the composition of channels $\mathcal{E}_{n+1}\circ\dots\circ\mathcal{E}_{1}$
in Fig.~\ref{fig:A-PPT--comb} is a PPT channel. This is equivalent
to requiring that the Choi matrix of the $n$-comb \citep{Circuit-architecture,Hierarchy-combs,Gour-Scandolo}
is the Choi matrix of a PPT channel. PPT combs will give us an upper
bound on the amount of ebits generated in an LOCC procedure. However,
again, we do not know if this implies that each channel $\mathcal{E}_{1}$,
\ldots , $\mathcal{E}_{n+1}$ is PPT, but we conjecture it is not
the case. 

By the mathematical properties of PPT combs and PPT channels, we can
show that no ebits can be distilled out of PPT channels even with
the most general adaptive PPT scheme (see Ref.~\citep[section 7]{Gour-Scandolo}).
Since this is an upper bound for LOCC adaptive schemes, we conclude
that no entanglement distillation from PPT channels is possible under
LOCC protocols either.
\begin{thm}
It is impossible to distill entangled ebits from PPT channels under
any adaptive schemes in any resource theory of dynamical entanglement.
\end{thm}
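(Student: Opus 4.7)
The plan is to show that plugging any collection of PPT channels into any PPT $n$-comb produces a PPT channel (and hence a PPT state when the output legs are trivial), and then invoke the fact that an ebit is not PPT. Since LOCC combs are contained in PPT combs, this simultaneously rules out LOCC distillation.

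First, I would formalize the free structure. Fix a PPT $n$-comb, described by the channels $\mathcal{E}_{1},\dots,\mathcal{E}_{n+1}$ on Alice's and Bob's sides as in Fig.~\ref{fig:A-PPT--comb}, with the defining property that $\mathcal{E}_{n+1}\circ\dots\circ\mathcal{E}_{1}$ (as a map on the combined bipartite wires, with the slots replaced by identities) is PPT, equivalently that the comb's Choi matrix has positive partial transpose on Bob's systems. Let $\mathcal{N}^{(1)}_{AB},\dots,\mathcal{N}^{(n)}_{AB}$ be PPT channels plugged into the slots. I would denote by $\mathcal{M}_{A'B'}$ the resulting bipartite channel produced by the comb.

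Second, I would prove the closure statement: $\mathcal{M}_{A'B'}$ is PPT. The cleanest route is to work at the level of Choi matrices and observe that, since transpose commutes with parallel composition and reverses the order of sequential composition, the Choi matrix of $\Upsilon_{B'}[\mathcal{M}_{A'B'}]$ can be written as the same network diagram but with each of the building blocks replaced by its transposed version on Bob's wires. By the defining PPT property of the comb, the comb obtained this way is still a valid comb; by the PPT property of each $\mathcal{N}^{(j)}_{AB}$, each slot contribution remains a valid channel. The link-product formulation of the Choi matrix for quantum combs acting on channels (as recalled in the excerpt) then gives an explicit decomposition of $(J^{\mathcal{M}})^{T_{B'}}$ as a link product of positive semidefinite matrices, hence it is positive semidefinite, so $\mathcal{M}_{A'B'}$ is PPT.

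Third, I would conclude. An ebit has Choi/density matrix with a strictly negative eigenvalue of its partial transpose, so it is not PPT and in fact no state of the form $\phi_{m}^{+}$ with $m\geq 2$ is PPT. Distillation of ebits means producing (exactly, or in the limit) a channel $\mathcal{M}_{A'B'}$ whose output on a trivial input is arbitrarily close in diamond norm to $\phi_{m}^{+}$; but the PPT condition is preserved under the diamond-norm closure (the cone of positive semidefinite Choi matrices with PPT on $B$ is closed), so the limit would still be PPT, a contradiction. Because LOCC combs are a subclass of PPT combs, the impossibility holds a fortiori for LOCC, and thus for \emph{any} resource theory of dynamical entanglement whose free operations sit inside PPT.

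The main obstacle is the bookkeeping in the second step: one must track the partial transpose through a sequentially composed network with memories on both parties and verify that the PPT property of the comb, expressed as positivity of its Choi matrix's partial transpose on the Bob wires (input \emph{and} output, at every tooth), really does combine with the PPT property of the slots to yield positivity of $(J^{\mathcal{M}})^{T_{B'}}$. Once the link-product identity $J^{\Theta[\mathcal{N}_{1},\dots,\mathcal{N}_{n}]} = \mathbf{J}^{\Theta} \ast J^{\mathcal{N}_{1}} \ast \cdots \ast J^{\mathcal{N}_{n}}$ is stated cleanly and one checks that partial transpose on Bob's subsystems distributes over the link product, the rest is immediate.
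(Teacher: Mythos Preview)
Your proposal is correct and follows essentially the same route as the paper: show that inserting PPT channels into a PPT $n$-comb yields a PPT output (via the Choi/link-product representation and the fact that partial transpose on Bob's wires distributes through the network), then use that $\phi_m^{+}$ is NPT for $m\geq 2$, and finally invoke the inclusion of LOCC combs in PPT combs. The paper only sketches this in the letter and defers details to the companion article, but your link-product argument is exactly the kind of bookkeeping that makes the closure step rigorous.
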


As a result, we find an example of a bound entangled POVM.
\begin{example}
Recall that a POVM can be viewed as a quantum-to-classical channel.
Let $\beta_{A_{0}B_{0}}$ be any PPT bound entangled state of a bipartite
system $A_{0}B_{0}$, and consider the binary POVM $\left\{ \beta_{A_{0}B_{0}},I_{A_{0}B_{0}}-\beta_{A_{0}B_{0}}\right\} $.
Since both $\beta_{A_{0}B_{0}}$ and $I_{A_{0}B_{0}}-\beta_{A_{0}B_{0}}$
have positive partial transpose, it follows that this POVM is a PPT
channel. As such, it cannot produce distillable entanglement. This
means it is a bound entangled POVM.
\end{example}

\paragraph{Conclusions and outlook.}

In this letter, we addressed dynamical entanglement as a resource
theory of quantum processes. This is a major step in understanding
the role of entanglement in quantum theory, for it allows us to treat
static and dynamical entanglement on the same grounds \citep{Devetak-Winter,Resource-calculus},
which is something that had been missing since the inception of the
very first quantum information protocols \citep{Teleportation,Dense-coding}.
We found a set of measures of dynamical entanglement yielding necessary
and sufficient conditions for LOCC channel simulation. Then we generalized
the key tool of partial transpose, defining PPT superchannels. Working
with them, we obtained measures of dynamical entanglement that can
be computed with SDPs. This remarkable fact, which did not appear
in previous works on PPT superchannels (e.g., Ref.~\citep{WW18}),
is a consequence of our more relaxed definition of PPT superchannels
(definition~\ref{def:PPT}). This is not the only novelty with respect
to Ref.~\citep{WW18}: we were able to generalize their notion of
$\kappa$-entanglement with the max-logarithmic negativity (Eq.~\eqref{eq:MLN}).
Finally, we showed that we can distill no ebits under any adaptive
strategies out of PPT channels. This extends the known result for
PPT states \citep{Bound-entanglement}, and led us to the discovery
of bound entangled POVMs.

Clearly, our work just scratches the surface of a whole unexplored
world, opening the way for a thorough study of the new area of dynamical
entanglement. On a grand scale, our findings lead naturally to several
directions that can be explored anew. Think, e.g., of multipartite
entanglement \citep{Review-entanglement}, or of the whole zoo of
entanglement measures \citep{Plenio-review,Review-entanglement},
to be extended to channels. Moreover, our results for LOCC superchannels
can be translated to local operations and shared randomness (LOSR)
superchannels \citep{Rosset-distributed,Wolfe2020,Schmid2020,LOSR-nonlocality},
which are a strict subset of LOCC ones. LOSR superchannels were proved
essential for the formulation of resource theories for non-locality
\citep{LOSR-nonlocality}: they define the relevant notion of dynamical
entanglement in Bell and common-cause scenarios. This intriguing research
direction deserves a comprehensive study in the future.

Finally, providing us with a more general angle, research findings
in the resource theory of dynamical entanglement can also help us
gain new insights into one of the major open problems of quantum information
theory: the existence of NPT bound entangled states \citep{NPPT1,NPPT2,NPPT3}.\medskip{}

\begin{acknowledgments}
G.\ G.\ would like to thank Francesco Buscemi, Eric Chitambar, Mark
Wilde, and Andreas Winter for many useful discussions related to the
topic of this paper. The authors acknowledge support from the Natural
Sciences and Engineering Research Council of Canada (NSERC) through
grant RGPIN-2020-03938, from the Pacific Institute for the Mathematical
Sciences (PIMS), and a from Faculty of Science Grand Challenge award
at the University of Calgary.
\end{acknowledgments}

\bibliographystyle{apsrev4-2}
\bibliography{PPT}

\onecolumngrid

\appendix

\section{\label{sec:Bound-between-NPT}Bound between NPT entanglement generation
power and max-logarithmic negativity}

Define the \emph{NPT entanglement generation power} \citep{Bennett-bipartite,Resource-channels-1,Resource-channels-2,Gour-Winter}
as the maximum amount of NPT entanglement produced out of PPT states,
namely as
\begin{equation}
E_{g}^{\mathrm{PPT}}\left(\mathcal{N}_{AB}\right)=\max_{\rho_{A_{0}'B_{0}'A_{0}B_{0}}\textrm{ PPT}}E\left(\mathcal{N}_{AB}\left(\rho_{A_{0}'B_{0}'A_{0}B_{0}}\right)\right),\label{eq:e-generation power}
\end{equation}
where $E$ is a measure of NPT static entanglement.

In~\citep{Gour-Scandolo}, we defined the \emph{exact} NPT entanglement
single-shot distillation out of a bipartite channel $\mathcal{N}_{AB}$
as
\[
\mathrm{DISTILL}_{\mathrm{PPT},\mathrm{exact}}^{\left(1\right)}\left(\mathcal{N}_{AB}\right)=\log_{2}\max\left\{ m:\mathcal{N}_{AB}\overset{\mathrm{PPT}}{\longrightarrow}\phi_{m}^{+}\right\} ,
\]
where $\phi_{m}^{+}$ is the maximally entangled state with Schmidt
rank $m$. The distillable NPT entanglement in the asymptotic limit
is then
\[
\mathrm{DISTILL}_{\mathrm{PPT},\mathrm{exact}}\left(\mathcal{N}_{AB}\right)=\limsup_{n}\frac{1}{n}\mathrm{DISTILL}_{\mathrm{PPT},\mathrm{exact}}^{\left(1\right)}\left(\mathcal{N}_{AB}^{\otimes n}\right),
\]
where we are using a parallel scheme for distillation. If $E$ in
Eq.~\eqref{eq:e-generation power} is taken to be the exact asymptotic
PPT distillation of static entanglement \citep{Resource-channels-1},
we can link the NPT entanglement generation power to the MLN.
\begin{lem}
For any bipartite channel $\mathcal{N}_{AB}$, we have
\[
E_{g}^{\mathrm{PPT}}\left(\mathcal{N}_{AB}\right)\leq\mathrm{DISTILL}_{\mathrm{PPT},\mathrm{exact}}\left(\mathcal{N}_{AB}\right),
\]
where $E_{g}^{\mathrm{PPT}}$ is defined using the exact asymptotic
PPT distillation of static entanglement.
\end{lem}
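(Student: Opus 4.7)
The plan is to turn any bipartite PPT state $\rho^{\star}$ that does well in the definition of $E_{g}^{\mathrm{PPT}}$ into a channel-level exact PPT distillation protocol for $\mathcal{N}_{AB}^{\otimes n}$: prepare $n$ copies of $\rho^{\star}$, route them through $\mathcal{N}_{AB}^{\otimes n}$, and then apply the state-level asymptotic exact PPT distillation to the product output $\mathcal{N}_{AB}(\rho^{\star})^{\otimes n}$. Because the concatenation is itself a PPT superchannel acting on $\mathcal{N}_{AB}^{\otimes n}$, the rate at which it produces ebits automatically lower-bounds $\mathrm{DISTILL}_{\mathrm{PPT},\mathrm{exact}}(\mathcal{N}_{AB})$, and that rate equals $E(\mathcal{N}_{AB}(\rho^{\star}))$ by construction; optimizing over $\rho^{\star}$ then yields the lemma.

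Concretely, I would first fix a PPT state $\rho^{\star}_{A_{0}'B_{0}'A_{0}B_{0}}$ attaining (or $\epsilon$-approximating) the maximum in Eq.~\eqref{eq:e-generation power}, and set $\sigma^{\star}:=(\mathrm{id}_{A_{0}'B_{0}'}\otimes\mathcal{N}_{AB})(\rho^{\star})$. Since $E$ is the exact asymptotic PPT distillation of static entanglement, for every $\delta>0$ there exists, for all sufficiently large $n$, a PPT channel $\mathcal{D}_{n}$ mapping $(\sigma^{\star})^{\otimes n}$ exactly to a maximally entangled state $\phi_{m_{n}}^{+}$ with $\frac{1}{n}\log_{2}m_{n}\geq E(\sigma^{\star})-\delta$. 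I would then assemble a superchannel $\Theta_{n}$ on $\mathcal{N}_{AB}^{\otimes n}$ whose preprocessing prepares $(\rho^{\star})^{\otimes n}$ and feeds the $A_{0}B_{0}$ subsystems into the $n$ slots of $\mathcal{N}_{AB}^{\otimes n}$ (forwarding the $A_{0}'B_{0}'$ ancillas to the postprocessing via memory), and whose postprocessing is $\mathcal{D}_{n}$. By construction $\Theta_{n}[\mathcal{N}_{AB}^{\otimes n}]=\phi_{m_{n}}^{+}$ exactly.

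The decisive step is verifying that $\Theta_{n}$ is a PPT superchannel. Preparation of a PPT state is a PPT channel: its Choi operator is the state itself, which has positive partial transpose on Bob's side by assumption; and $\mathcal{D}_{n}$ is PPT by construction. Hence $\Theta_{n}$ is a restricted PPT superchannel, and by the observation made just after Definition~\ref{def:PPT} it is therefore PPT in the sense used to define $\mathrm{DISTILL}_{\mathrm{PPT},\mathrm{exact}}^{(1)}$. Consequently $\mathrm{DISTILL}_{\mathrm{PPT},\mathrm{exact}}^{(1)}(\mathcal{N}_{AB}^{\otimes n})\geq\log_{2}m_{n}$, and dividing by $n$, taking $\limsup_{n}$, sending $\delta\to 0$, and finally optimizing over $\rho^{\star}$ yields the claimed inequality. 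The only real obstacle is this PPT-preservation check at the superchannel level; once it is in place, the result is a routine concatenation-of-protocols argument.
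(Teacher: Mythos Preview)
Your proposal is correct and follows essentially the same route as the paper: take an optimal PPT input state, use its preparation as a PPT preprocessing, apply $\mathcal{N}_{AB}^{\otimes n}$, and then use the PPT state-level exact distillation as postprocessing, so that the resulting restricted PPT superchannel certifies the desired rate. The only cosmetic slip is the phrase ``for all sufficiently large $n$'': since the static distillation rate is a $\limsup$, you are only guaranteed a subsequence of $n$'s achieving $\frac{1}{n}\log_{2}m_{n}\geq E(\sigma^{\star})-\delta$, but this is harmless because the channel-level quantity is also defined via $\limsup$.
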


\begin{proof}
The proof follows similar lines to the proof of theorem~4 in~\citep{Resource-channels-1}.
Let $R=E_{g}^{\mathrm{PPT}}\left(\mathcal{N}_{AB}\right)$, and let
$\omega_{A_{0}'B_{0}'A_{0}B_{0}}$ be the optimal PPT state achieving
$R$, i.e., $E\left(\mathcal{N}_{AB}\left(\omega_{A_{0}'B_{0}'A_{0}B_{0}}\right)\right)=R$.
Now let us construct a distillation protocol for $\phi_{m}^{+}$.
To this end, consider the channel preparing $\omega_{A_{0}'B_{0}'A_{0}B_{0}}^{\otimes n}$
out of nothing (namely out of the 1-dimensional system). This is a
PPT channel, which we will use as a pre-processing to construct a
(restricted) PPT superchannel. Defining $\sigma_{A_{0}'B_{0}'A_{1}B_{1}}:=\mathcal{N}_{AB}\left(\omega_{A_{0}'B_{0}'A_{0}B_{0}}\right)$,
then $\mathcal{N}_{AB}^{\otimes n}\left(\omega_{A_{0}'B_{0}'A_{0}B_{0}}^{\otimes n}\right)=\sigma_{A_{0}'B_{0}'A_{1}B_{1}}^{\otimes n}.$
Since $E_{g}^{\mathrm{PPT}}$ is defined as the exact asymptotic distillation
rate, we know that there exists a PPT post-processing such that $\limsup_{n}\frac{1}{n}\mathrm{DISTILL}_{\mathrm{PPT},\mathrm{exact}}^{\left(1\right)}\left(\sigma_{A_{0}'B_{0}'A_{1}B_{1}}^{\otimes n}\right)=R$,
where $\mathrm{DISTILL}_{\mathrm{PPT},\mathrm{exact}}^{\left(1\right)}$
is the analogous definition for states rather than channels. With
this in mind, we can use this post-processing to obtain some $\phi_{m}^{+}$;
thus we prove our statement.
\end{proof}
By a Carnot-like argument \citep{Thermodynamic-entanglement}, one
can prove that $\mathrm{DISTILL}_{\mathrm{PPT},\mathrm{exact}}\left(\mathcal{N}_{AB}\right)\leq\mathrm{COST}_{\mathrm{PPT},\mathrm{exact}}\left(\mathcal{N}_{AB}\right)$,
where $\mathrm{COST}_{\mathrm{PPT},\mathrm{exact}}\left(\mathcal{N}_{AB}\right)$
denotes the exact asymptotic NPT cost of $\mathcal{N}_{AB}$. By theorem~4
in the main letter, the exact asymptotic NPT cost of $\mathcal{N}_{AB}$
is the MLN, whence we conclude that
\[
E_{g}^{\mathrm{PPT}}\left(\mathcal{N}_{AB}\right)\leq LN_{\max}\left(\mathcal{N}_{AB}\right).
\]

\end{document}